\documentclass[conference, letterpaper]{IEEEtran}
\usepackage{amsmath,amssymb}
\usepackage{subfigure}
\usepackage{graphicx,graphics,color,psfrag}
\usepackage{cite,balance}
\usepackage{caption}
\captionsetup{font={small}}
\allowdisplaybreaks
\usepackage{algorithm}
\usepackage{algorithmic}
\usepackage{accents}
\usepackage{amsthm}
\usepackage{bm}
\usepackage{url}
\usepackage[english]{babel}
\usepackage{multirow}
\usepackage{enumerate}
\usepackage{cases}
\usepackage{stfloats}
\usepackage{dsfont}
\usepackage{color,soul}
\usepackage{amsfonts}
\usepackage{cite,graphicx,amsmath,amssymb}
\usepackage{subfigure}
\usepackage{fancyhdr}
\usepackage{hhline}
\usepackage{graphicx,graphics}
\usepackage{array,color}
\usepackage{mathtools}
\usepackage{amsmath}

\newtheorem{definition}{\emph{\underline{Definition}}}

\newtheorem{lemma}{\emph{\underline{Lemma}}}

\newtheorem{remark}{\bf \emph{\underline{Remark}}}

\def\({\left(}
\def\){\right)}

\setcounter{page}{1}



\def\b0{{\mathbf{0}}}








\newcommand{\nn}{\nonumber}



\usepackage[top=0.65in, bottom=1.06in, left=0.58in, right=0.58in]{geometry}
\begin{document}
\captionsetup[figure]{name={Fig.}}

\title{\huge Joint Beam
	Scheduling and Power Allocation for SWIPT in Mixed Near- and Far-Field Channels \vspace{-0.5cm} 
} 
\author{\IEEEauthorblockN{Yunpu Zhang$^{*}$, Changsheng You$^{*}$, Weijie Yuan$^{*}$, Fan Liu$^{*}$, and Rui Zhang$^{\dagger\ddagger}$}
	\IEEEauthorblockA{$^{*}$Department of Electronic and Electrical Engineering, 
		Southern University of Science and Technologyy, Shenzhen, China\\
		$^{\dagger}$The Chinese University of Hong Kong (Shenzhen), and
	Shenzhen Research Institute of Big Data, Shenzhen, China\\
			$^{\ddagger}$Department of Electrical and
		Computer Engineering, National University of Singapore, Singapore\\
		Email: zhangyp2022@mail.sustech.edu.cn; \{youcs, yuanwj, liuf6\}@sustech.edu.cn; rzhang@cuhk.edu.cn}\vspace{-1cm}}

\maketitle

\begin{abstract} 
Extremely large-scale array (XL-array) has emerged as a promising technology to enhance the spectrum efficiency and spatial resolution in future wireless networks, leading to a fundamental paradigm shift from conventional far-field communications towards the new near-field communications. Different from the existing works that mostly considered simultaneous wireless information
and power transfer (SWIPT) in the far field, we consider in this paper a new and practical scenario, called \emph{mixed near- and far-field} SWIPT, in which energy harvesting (EH) and information decoding (ID) receivers are located in the near- and far-field regions of the XL-array base station (BS), respectively.
Specifically, we formulate an optimization problem to maximize the weighted sum-power harvested at all EH receivers by jointly designing the BS beam scheduling and power allocation, under the constraints on the ID sum-rate and BS transmit power.
To solve this non-convex optimization problem, an efficient algorithm is proposed to obtain a suboptimal solution by leveraging the binary variable elimination and successive convex approximation methods.
Numerical results demonstrate that our proposed joint design achieves substantial performance gain over other benchmark schemes.
\end{abstract}
\begin{IEEEkeywords}
Simultaneous wireless information and power transfer (SWIPT), extremely large-scale array (XL-array), mixed near- and far-field channels, beam scheduling, power allocation.
\end{IEEEkeywords}
\vspace{-0.3cm}
\section{Introduction}
\vspace{-0.2cm}
Extremely large-scale array/surface (XL-array/surface) has been  envisioned as a promising technology to achieve super-high spectrum efficiency and spatial resolution for future sixth-generation (6G) wireless networks \cite{8869705}. With the significant increase in number of antennas, the well-known \emph{Rayleigh distance} will expand to dozens or even hundreds
of meters. This thus leads to a fundamental paradigm shift in the electromagnetic (EM) field characteristics, from the conventional far-field communications towards the new \emph{near-field communications} \cite{liu2023near,9903389}.

While most of the existing works have considered either the near- or far-field communications, the \emph{mixed} near- and far-field communications are likely to appear, in which
there exist both near- and far-field users in the network \cite{zhang2023mixed}.
This means that in a typical communication
scenario, the users may be located in both the near- and far-field regions from the base station (BS), hence causing more complicated interference issue. Specifically, an interesting observation was unraveled in \cite{zhang2023mixed} that due to the energy-spread effect, the near-field user may suffer strong interference from the discrete Fourier transform (DFT)-based far-field beam,
when its spatial angle is in the neighborhood of the far-field user angle. On the other hand, such power leakage from the DFT-based far-field beam can also be utilized to benefit the near-field user, leading to the new application of mixed-field simultaneous wireless information and power transfer (SWIPT) where energy harvesting (EH) and information decoding (ID) receivers are located in the near- and far-field, respectively. However, the design of mixed-field SWIPT also confronts with new challenges. In particular, thanks to the near-field beam-focusing property \cite{9738442}, the beamforming for the near-field EH receivers should be designed to maximize the EH efficiency, while at the same time, minimizing the interference to the far-field ID receivers. Besides, the beamforming design for the far-field ID receivers should take into account the energy-spread effect, which can opportunistically charge the near-field EH receivers when they are located in similar angles. Moreover, the power allocation of the BS should be carefully designed to strike the new \emph{near-and-far} beamforming tradeoff in the mixed SWIPT, with the effects of both beam focusing and energy spread taken into account.
\begin{figure}[t]
	\centering
	\includegraphics[width=6.5cm]{./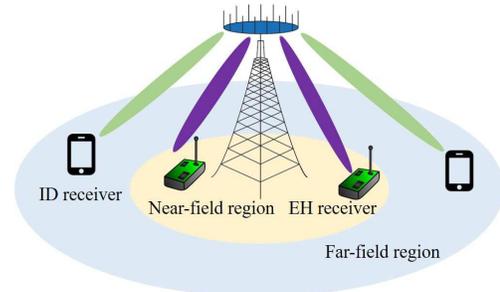}
	\caption{A mixed near- and far-field SWIPT system.}\label{fig:SM}
	\vspace{-18pt}
\end{figure}



To address the above issues, we consider in this paper a new \emph{mixed near- and far-field} SWIPT system as shown in Fig. 1, where the BS equipped with an XL-array simultaneously serves multiple EH and ID receivers, which are located in the near- and far-field regions of the XL-array, respectively. Specifically, our goal is to maximize the weighted sum-power harvested at all EH receivers while ensuring the sum-rate for ID receivers. To this end, 
we propose a joint BS beam scheduling and power allocation design by utilizing the binary variable elimination and successive convex approximation (SCA) methods. Numerical
results demonstrate the effectiveness of the proposed scheme as compared to other benchmark schemes. In particular, we show the necessity of transmit power allocation to the EH beam in the mixed-field SWIPT, which is in
sharp contrast to the conventional far-field SWIPT case, for which all transmit power should be allocated to ID
beams.
\vspace{-0.2cm}
\section{System Model and Problem Formulation}\label{SMandPF}
\vspace{-0.1cm}
\subsection{System Model}
\vspace{-0.1cm}
We consider a mixed-field SWIPT system as shown in Fig.~\ref{fig:SM}, where a BS equipped with an $N$-antenna XL-array simultaneously serves multiple EH and ID receivers. Specifically, 
$K$ single-antenna EH receivers, denoted by $\mathcal{K}=\{1,2,\cdots,K\}$, are located in the near-field region of the XL-array for enabling efficient energy harvesting, while $M$ single-antenna ID receivers, denoted by $\mathcal{M}=\{1,2,\cdots,M\}$, are assumed to lie in the far-field region of the XL-array with the BS-ID receiver distances larger than the so-called Rayleigh distance, defined as $Z=\frac{2D^2}{\lambda}$ with $D$ and $\lambda$ denoting the antenna array aperture and carrier wavelength, respectively. {For the XL-array BS, it applies the hybrid beamforming architecture to serve $(K+M)$ EH and ID receivers with $N_{\rm RF}$ radio frequency (RF)
	chains, where $N_{\rm RF}\ge K+M$.}

\subsubsection{Near- and Far-Field Channel Models}
In the following, we introduce the channel models
for the near-field EH receivers and far-field ID receivers, respectively.
\begin{figure}[t]
	\centering
	\includegraphics[width=8.5cm]{./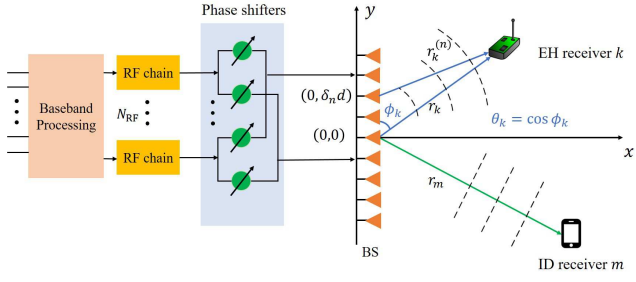}
	\caption{Illustration of the near and far-field channel models.}\label{fig:An}
	\vspace{-18pt}
\end{figure}

\underline{\bf Near-field channel model for EH receivers:}
For the near-field EH receiver $k$, its channel from the XL-array BS can be modeled as
	\vspace{-0.2cm}
\begin{equation}\label{Eq:mp}
	\mathbf{h}^{\rm EH}_{k}=	\mathbf{h}^{\rm EH}_{{\rm LoS}, k}+	\sum_{\ell=1}^{L_{k}}\mathbf{h}^{\rm EH}_{{\rm NLoS}, k, \ell}, ~~~k\in\mathcal{K},
	\vspace{-0.2cm}
\end{equation}  
where there exist one line-of-sight  (LoS) path and $L_{k}$  non-LoS (NLoS) paths between the BS and EH receiver $k$.
In this paper, we consider SWIPT in high-frequency bands (e.g., mmWave and THz), whose channels are susceptible to blockage. Therefore, we mainly consider the LoS channel component for both EH and ID receivers, while the NLoS components are neglected due to small power \cite{9957130,liu2022low}. As such, the channel from the BS to EH receiver $k$ can be approximated as $\mathbf{h}^{\rm EH}_{k} \approx \mathbf{h}^{\rm EH}_{{\rm LoS}, k}$, which is modeled as follows. First, based on the near-field spherical wavefront model in Fig.~\ref{fig:An}, the distance between the $n$-th antenna at the BS (i.e., ($0, \delta_{n}d$)) and EH receiver $k$ is given by
\begin{align}\label{Eq:rrr}
	r^{(n)}_{k}&=\sqrt{r_{k}^2+\delta_{n}^2d^2-2r_{k}\theta_{k} \delta_{n}d},\vspace{-0.2cm}
\end{align}
where $r_{k}$ denotes the distance between the BS antenna center and EH receiver $k$, and $\theta_{k}=2d\cos(\phi_{k})/ \lambda$ denotes the spatial angle at the BS with $\phi_{k}$ denoting the physical angle-of-departure (AoD) from the BS center to EH receiver $k$. Then, by using the second-order Taylor expansion $\sqrt{1+x}\approx1+\frac{1}{2}x-\frac{1}{8}x^2$, $r^{(n)}_{k}$ in \eqref{Eq:rrr} can be approximated as $r^{(n)}_{k}\approx r_{k}-\delta_{n}d\theta_{k}+\frac{\delta_{n}^2d^2(1-\theta^2_{k})}{2r_{k}}$, which is accurate enough when the BS and EH receiver distance $r_{k}$ is smaller than the Rayleigh distance \cite{9957130}.
As such, the LoS channel between the $n$-th BS antenna and EH receiver $k$ can be modeled as $[\mathbf{h}^{\rm EH}_{k}]_{n}=h^{\rm EH}_{k,n}e^{-j 2 \pi(r^{(n)}_{k}-r_{k})/\lambda},$
where $h^{\rm EH}_{k,n}=\frac{\lambda}{4\pi r^{(n)}_{k}}$ is the antenna-wise complex-valued channel gain of EH receiver $k$. Moreover, we assume that the EH receivers are located in the radiative  Fresnel region for which $r_{k}>r_{\rm min}=\max\{\frac{1}{2}\sqrt{\frac{D^3}{\lambda}},1.2D\}$ \cite{bjornson2021primer}. Under the above assumption, we have $h^{\rm EH}_{k,1}\approx h^{\rm EH}_{k,2}\cdots\approx h^{\rm EH}_{k,N}\triangleq h^{\rm EH}_{k}=\frac{\lambda}{4\pi r_{k}}$, where $h^{\rm EH}_{k}$ is the common complex-valued channel gain for different antennas \cite{9738442}.

Based on the above, the LoS-dominant near-field channel from the BS to EH receiver $k$ can be simply modeled as
\begin{equation}
	\mathbf{h}^{\rm EH}_{k}\approx\sqrt{N}h^{\rm EH}_{k} \mathbf{b}(\theta_{k},r_{k}), ~~~k\in\mathcal{K},\vspace{-0.1cm}
\end{equation}  
where $\mathbf{b}(\theta_{k},r_{k})$ denotes the (normalized) near-field channel steering vector, which is given by
\begin{equation}\label{near_steering}
	\!\mathbf{b}\left(\theta_{k}, r_{k}\right)\!=\!\frac{1}{\sqrt{N}}\!\left[e^{-j 2 \pi(r^{(0)}_{k}-r_{k})/\lambda}, \cdots, e^{-j 2 \pi(r^{(N-1)}_{k}-r_{k})/\lambda}\right]^{T}.\nn	\vspace{-0.1cm}
\end{equation}

{\color{black}
\underline{\bf Far-field channel model for ID receivers:} For each ID receiver that is located in the far-field of the BS, say ID receiver $m$, its channel from the BS can be characterized as below based on the planar wavefront propagation model,
	\vspace{-0.2cm}
\begin{equation}
	\mathbf{h}^{\rm ID}_{m}=\mathbf{h}^{\rm ID}_{{\rm LoS}, m} +\sum_{\ell=1}^{L_{m}}\mathbf{h}^{\rm ID}_{{\rm NLoS}, m, \ell}, ~~~m\in\mathcal{M},	\vspace{-0.2cm}
\end{equation}
where there are one LoS path and $L_{m}$ NLoS paths between the BS and ID receiver $m$.
By ignoring the negligible NLoS components in high-frequency bands, the BS$\to$ID receiver $m$ channel can be approximated by its LoS component, i.e.,
\begin{equation}
	\mathbf{h}^{\rm ID}_{m}\approx\sqrt{N}h^{\rm ID}_{m} \mathbf{a}(\theta_{m}), ~~~m\in\mathcal{M},\vspace{-0.2cm}
\end{equation}
where $h^{\rm ID}_{m}=\frac{\lambda}{4\pi r_{m}}$ represents the complex-valued channel gain of ID receiver $m$. In addition, $\mathbf{a}(\theta_{m})$ denotes the (normalized) far-field channel steering vector, given by
\begin{equation}\label{far_steering}
	\mathbf{a}(\theta_{m})\triangleq \frac{1}{\sqrt{N}}\left[1, e^{j \pi \theta_{m}},\cdots, e^{j \pi (N-1)\theta_{m}}\right]^T,\vspace{-0.2cm}
\end{equation}
where $\theta_{m}$ denotes the spatial angle at the BS with $\phi_{m}$ denoting the physical AoD from the BS center to ID receiver $m$.}
\subsubsection{Signal Model}
	Let $x^{\rm EH}_{k}$, $k\in\mathcal{K}$ denote the transmitted energy-carrying signal for EH receiver $k$ with power $P^{\rm EH}_{k}$ and $x^{\rm ID}_{m}$, $m\in\mathcal{M}$ the information-bearing signal for  ID receivers $m$  with power $P^{\rm ID}_{m}$. {Then, by applying hybrid beamforming, the transmitted signal vector by the BS is given by
	$\bar{\mathbf{x}}=\mathbf{F}_{\rm A}\mathbf{F}_{\rm D}\mathbf{x},$
where $\mathbf{x}=[x_{1}^{\rm EH},\cdots,x_{K}^{\rm EH},x_{1}^{\rm ID},\cdots,x_{M}^{\rm ID}]^{T}$, $\mathbf{F}_{\rm D}$ represents a $(K+M)\times(K+M)$ digital precoder and $\mathbf{F}_{\rm A}=[\mathbf{v}^{\rm EH}_{1},\cdots,\mathbf{v}^{\rm EH}_{K},\mathbf{v}^{\rm ID}_{1}\cdots,\mathbf{v}^{\rm ID}_{M}]$ denotes an $N\times(K+M)$ analog precoder with $\mathbf{v}^{\rm EH}_{k}$ and $\mathbf{v}^{\rm ID}_{m}$ representing the analog beamforming vector for EH receiver $k$ and ID receiver $m$, respectively.}
 In this paper, to obtain useful insights as well as reduce the hardware cost of the XL-array, we mainly consider the purely analog beamforming design to evaluate the performance gain, for which the digital precoder is set as an identity matrix, i.e., $\mathbf{F}_{\rm D}=\mathbf{1}_{(K+M),(K+M)}$.
 {\color{black}To further improve the performance, the weighted minimum mean square error (WMMSE) or zero-forcing (ZF) based digital beamforming can be properly designed given the analog precoder $\mathbf{F}_{\rm A}$. The corresponding performance of hybrid beamforming will be evaluated by simulations in Section~\ref{SE:NR}.}

Let $\mathcal{D}\triangleq  \{s^{\rm EH}_{1},\cdots,s^{\rm EH}_{K},s^{\rm ID}_{1},\cdots,s^{\rm ID}_{M}\}\in\mathbb{C}^{K+M}$ denote the beam-scheduling indicator set for the XL-array BS, where $s^{\rm EH}_{k}$ and $s^{\rm ID}_{m}$ denote respectively the binary scheduling variable for each EH receiver $k$ and ID receiver $m$. Specifically, $s^{\rm EH}_{k}=1$ if EH receiver $k$ is scheduled by the BS and $s^{\rm EH}_{k}=0$ otherwise; while $s^{\rm ID}_{m}$ is defined in a way similar to $s^{\rm EH}_{k}$.
%

\underline{\bf Signal model for far-field ID receivers:}
Consider the data transmission to a far-field ID receiver $m$. Its received signal is given by 
\vspace{-0.3cm}
\begin{align}
	y^{\rm ID}_{m}=(\mathbf{h}^{\rm ID}_{m})^H&\mathbf{v}^{\rm ID}_{m}s^{\rm ID}_{m}{x}^{\rm ID}_{m}+\underbrace{(\mathbf{h}^{\rm ID}_{m})^H\sum_{k=1}^{K}\mathbf{v}^{\rm EH}_{k}s^{\rm EH}_{k}{x}^{\rm EH}_{k}}_{\rm \textbf{Interference from EH signals}}\nn\\
	+&\underbrace{(\mathbf{h}^{\rm ID}_{m})^H\sum^{M}_{j=1,j\neq m}\mathbf{v}^{\rm ID}_{j}s^{\rm ID}_{j}{x}^{\rm ID}_{j}}_{\rm \textbf{Interference from other ID signals}}+z^{\rm ID}_{m},
\end{align}
where $z^{\rm ID}_{m}$ is the additive white Gaussian noise (AWGN) at ID receiver ${m}$ with zero mean and power $\sigma^2_{m}$.
As such, the received signal-to-interference-plus-noise ratio
(SINR) at ID receiver ${m}$ is given by \eqref{Eq:SNR}, as
shown at the top of the page,
\begin{figure*}[ht]
\begin{equation}\label{Eq:SNR}
	{\rm SINR}^{\rm ID}_{m}=
	\frac{s^{\rm ID}_{m}P^{\rm ID}_{m}g^{\rm ID}_{m}|\mathbf{a}^{H}\left(\theta_{m}\right)\mathbf{v}^{\rm ID}_{m}|^2}{\sum^{K}_{k=1}s^{\rm EH}_{k}P^{\rm EH}_{k}g^{\rm ID}_{m}|\mathbf{a}^{H}\left(\theta_{m}\right)\mathbf{v}^{\rm EH}_{k}|^2+\sum^{M}_{j=1,j	\neq m}s^{\rm ID}_{j}P^{\rm ID}_{j}g^{\rm ID}_{m}|\mathbf{a}^{H}\left(\theta_{m}\right)\mathbf{v}^{\rm ID}_{j}|^2+\sigma^2_{m}},\vspace{-0.1cm}
\end{equation}
\vspace{-0.5cm}
	\hrulefill
\end{figure*}
where $g^{\rm ID}_{m}=N|h^{\rm ID}_{m}|^2$. The corresponding achievable rate in bits per second per Hertz (bps/Hz) is given by $R^{\rm ID}_{m}=\log_2\left(1+{\rm SINR}^{\rm ID}_{m}\right).$

\underline{\bf Signal model for near-field EH receivers:}
For wireless power transfer (WPT), due to the broadcast property of wireless channels, each EH receiver can harvest wireless energy from both the energy and information signals. As a result, by ignoring the negligible noise power at the
EH receivers and assuming the linear EH model \cite{6860253,9110849}, the harvested power at EH receiver $k$ is given by
\begin{align}\label{Eq:HP111}
	&Q_{k}\!=\!\zeta(s^{\rm EH}_{k}|(\mathbf{h}^{\rm EH}_{k})^H\mathbf{v}^{\rm EH}_{k}{x}^{\rm EH}_{k}|^2\!\nn\\
	&\!+\!\!\underbrace{\sum_{i=1,i	\neq k}^{K}\!\!s^{\rm EH}_{i}|(\mathbf{h}^{\rm EH}_{k})^H\mathbf{v}^{\rm EH}_{i}{x}^{\rm EH}_{i}|^2}_{\rm \textbf{Harvested power from other EH signals }}\!
	+\!\underbrace{\sum_{m=1}^{M}s^{\rm ID}_{m}|(\mathbf{h}^{\rm EH}_{k})^H\mathbf{v}^{\rm ID}_{m}{x}^{\rm ID}_{m}|^2}_{\rm \textbf{Harvested power from ID signals}}),\nn
\end{align}
where $0<\zeta \leq 1$ denotes the energy harvesting efficiency.
	\vspace{-0.2cm}
\subsection{Problem Formulation}
In this paper, we assume that the BS has perfect channel state information (CSI) of all EH and ID receivers, i.e., near- and far-field channel steering vectors $\mathbf{a}$ and $\mathbf{b}$ as well as channel gains $g^{\rm EH}_{k}$ and $g^{\rm ID}_{m}$. In practice, this CSI can be efficiently obtained by using existing far-field and near-field channel estimation and beam training methods (see, e.g., \cite{8949454,9129778,9913211}). In addition, for ease of implementation, we assume that if an EH or ID receiver is scheduled, the BS will steer a beam towards it to maximize its received power based on e.g., codebook based beamforming, i.e., $\mathbf{v}^{\rm ID}_{m}=\mathbf{a}\left(\theta_{m}\right)$ and $\mathbf{v}^{\rm EH}_{k}=\mathbf{b}\left(\theta_{k}, r_{k}\right)$. 

Our objective is to jointly optimize the beam scheduling (i.e., $\mathcal{D}$) and power allocation (i.e., $\{P^{\rm EH}_{k}\}$ and $\{P^{\rm ID}_{m}\}$) of the XL-array BS for maximizing the weighted sum-power harvested at all EH receivers, subject to a sum-rate constraint for all ID receivers and a total BS transmit power constraint.
Let $\alpha_{k} \ge 0$ denote a predefined power weight for each EH receiver $k$, where a larger value of $\alpha_{k}$ indicates higher preference for transferring energy to EH receiver $k$, as compared to other EH receivers. As such, the weighted sum-power transferred to all EH receivers, denoted by $Q$, can be expressed as
	\vspace{-0.2cm}
\begin{equation}\label{Eq:sum-power}
	Q\left(\mathcal{D},\{P^{\rm EH}_{k}\},\{P^{\rm ID}_{m}\}\right) =\sum_{k=1}^{K}\alpha_{k}	Q_{k}\left(\mathcal{D},\{P^{\rm EH}_{k}\},\{P^{\rm ID}_{m}\}\right).\nn
	\vspace{-0.2cm}
\end{equation}
Based on the above, this optimization problem can be formulated as follows
 \begin{subequations}
 	\begin{align}
 	\max_{\substack{\mathcal{D},\{P^{\rm EH}_{k}\},\{P^{\rm ID}_{m}\}} }  &~	Q\left(\mathcal{D},\{P^{\rm EH}_{k}\},\{P^{\rm ID}_{m}\}\right)
 		\label{Eq:Or_OJ}\\
 	\text{s.t.}~~~~~
 		&~ 	\sum_{m=1}^{M}R^{\rm ID}_{m}\left(\mathcal{D},\{P^{\rm EH}_{k}\},\{P^{\rm ID}_{m}\}\right)\ge R,\label{C:sum-rate}
 		\\	({\bf P1}):	~~~~
 		&~
 		s^{\rm EH}_{k},s^{\rm ID}_{m} \in \{0,1\},~~ k\in \mathcal{K},m\in \mathcal{M},\label{C:binary}\\
 		&~
 		 \sum_{k=1}^{K}s^{\rm EH}_{k}P^{\rm EH}_{k}+\sum_{m=1}^{M}s^{\rm ID}_{m}P^{\rm ID}_{m}\le P_{0},\label{C:Overall_P}\\
 		 &~
 		 P^{\rm EH}_{k} \ge 0, P^{\rm ID}_{m}\ge 0,~~ k\in \mathcal{K},m\in \mathcal{M}.\label{C:nonnegative}
 	\end{align}
 \end{subequations}

Problem (P1) is a mixed-integer optimization problem due to the binary beam-scheduling optimization variables $\mathcal{D}$ in \eqref{C:binary} and the continuous power-allocation variables $\{P^{\rm EH}_{k}\}$, $\{P^{\rm ID}_{m}\}$ in \eqref{C:Overall_P}. Moreover, the beam scheduling and power allocation optimization are strongly coupled in the objective function and the constraints in \eqref{C:sum-rate}--\eqref{C:Overall_P}, which renders problem (P1) more difficult to be optimally solved. To tackle these difficulties, we propose an efficient algorithm in this paper to obtain a high-quality solution to problem (P1).
	\vspace{-0.2cm}
\section{Problem Reformulation}\label{PRF}
In this section, we reformulate problem (P1) into a more compact form to facilitate the subsequent algorithm design. 
	\vspace{-0.1cm}
\subsection{Eliminating the Binary Optimization Variables}
One of the main challenges in solving problem (P1) arises from the intrinsic coupling between the binary scheduling variables (i.e., $s^{\rm EH}_{k}$ and $s^{\rm ID}_{m}$) and power allocation (i.e., $\{P^{\rm EH}_{k}\}$ and $\{P^{\rm ID}_{m}\}$), which appear in both the objective function and constraints. To address this issue, we introduce the following variables to eliminate the binary optimization variables
\begin{equation}\label{b_s_e}
	\tilde{P}^{\rm EH}_{k} = s^{\rm EH}_{k}P^{\rm EH}_{k}, ~~~~ \tilde{P}^{\rm ID}_{m} = s^{\rm ID}_{m}P^{\rm ID}_{m}.
\end{equation}
As such, the constraint \eqref{C:sum-rate} can be equivalently transformed into the following form
	\vspace{-0.2cm}
\begin{equation}\label{Eq:C:sum-rate_t}
\!\!\sum_{m=1}^{M}R^{\rm ID}_{m}(\mathcal{D},\{P^{\rm EH}_{k}\},\{P^{\rm ID}_{m}\})\!=\!\sum_{m=1}^{M}\tilde{R}^{\rm ID}_{m}(\{\tilde{P}^{\rm EH}_{k}\},\{\tilde{P}^{\rm ID}_{m}\}),
	\vspace{-0.2cm}
\end{equation}
where $\tilde{R}^{\rm ID}_{m}(\{\tilde{P}^{\rm EH}_{k}\},\{\tilde{P}^{\rm ID}_{m}\})$ is given by \eqref{NewR}, as shown at the top of next page.
 \begin{figure*}
\begin{equation}\label{NewR}
	\!\!\!\!\tilde{R}^{\rm ID}_{m}(\{\tilde{P}^{\rm EH}_{k}\},\{\tilde{P}^{\rm ID}_{m}\})=
\log_2\left(\!1\!+\!\frac{\tilde{P}^{\rm ID}_{m}g^{\rm ID}_{m}}{\sum^{K}_{k=1}\tilde{P}^{\rm EH}_{k}g^{\rm ID}_{m}|\mathbf{a}^{H}\left(\theta_{m}\right)\mathbf{b}\left(\theta_{k}, r_{k}\right)|^2\!+\!\sum^{M}_{j=1,j	\neq m}\tilde{P}^{\rm ID}_{j}g^{\rm ID}_{m}|\mathbf{a}^{H}\left(\theta_{m}\right)\mathbf{a}\left(\theta_{j}\right)|^2\!+\!\sigma^2_{m}}\!\right)\!.\vspace{-0.1cm}
\end{equation}
\hrulefill
	\vspace{-0.4cm}
 \end{figure*}
Note that there is a one-to-one correspondence between $\tilde{P}^{\rm EH}_{k}$ and $\{s^{\rm EH}_k, {P}^{\rm EH}_{k}\}$. For example, if $\tilde{P}^{\rm EH}_{k}>0$, we have $s^{\rm EH}_k=1$ and ${P}^{\rm EH}_{k}= \tilde{P}^{\rm EH}_{k}$. Otherwise, if $\tilde{P}^{\rm EH}_{k}=0$, we have $s^{\rm EH}_k=0$ and ${P}^{\rm EH}_{k}=0$. 
As such, the objective function in \eqref{Eq:Or_OJ} can be expressed in a simpler form, which is given by  \eqref{NewQ}, as shown at the top of next page.
 \begin{figure*}
\begin{equation}\label{NewQ}
	\!\!\!\!\tilde{Q}(\{\tilde{P}^{\rm EH}_{k}\},\{\tilde{P}^{\rm ID}_{m}\}) =
\sum_{k=1}^{K}\!\alpha_{k}\zeta\!\!	\left(\tilde{P}^{\rm EH}_{k}g^{\rm EH}_{k}\!\!+\!\!\!\sum_{i=1,i	\neq k}^{K}\!\!\tilde{P}^{\rm EH}_{i}g^{\rm EH}_{k}|(\mathbf{b}^H(\theta_{k},r_{k})\mathbf{b}(\theta_{i},r_{i})|^2\!+\!\sum_{m=1}^{M}\tilde{P}^{\rm ID}_{m}g^{\rm EH}_{k}|\mathbf{b}^H(\theta_{k},r_{k})\mathbf{a}\left(\theta_{m}\right)|^2\!\!\right)\!\!.\vspace{-0.1cm}
\end{equation}
\hrulefill
	\vspace{-0.5cm}
 \end{figure*}
Similarly, the constraint in \eqref{C:Overall_P} can be rewritten as $\sum_{k=1}^{K}\tilde{P}^{\rm EH}_{k}+\sum_{m=1}^{M}\tilde{P}^{\rm ID}_{m}\le P_{0}.$

Based on the above, problem (P1) is equivalently expressed as follows
	\vspace{-0.2cm}
\begin{subequations}
	\begin{align}
	\max_{\substack{\{\tilde{P}^{\rm EH}_{k}\},\{\tilde{P}^{\rm ID}_{m}\}} }  &~~~\tilde{Q}(\{\tilde{P}^{\rm EH}_{k}\},\{\tilde{P}^{\rm ID}_{m}\}) \nn\\
		\text{s.t.}~~~
		&~ \sum_{m=1}^{M}\tilde{R}^{\rm ID}_{m}(\{\tilde{P}^{\rm EH}_{k}\},\{\tilde{P}^{\rm ID}_{m}\})\ge R,
		\\	({\bf P2}):~~~~
		&~
		\sum_{k=1}^{K}\tilde{P}^{\rm EH}_{k}+\sum_{m=1}^{M}\tilde{P}^{\rm ID}_{m}\le P_0,\\
			&~~ \tilde{P}^{\rm EH}_{k} \ge 0,\tilde{P}^{\rm ID}_{m}\ge 0,~~ k\in \mathcal{K},m\in \mathcal{M}.\label{C:nonnegative1}
			\\[-0.6cm]\nn
	\end{align}
\end{subequations}
\subsection{Correlation Evaluation between EH and ID Receivers}
Before solving problem (P2), we first study the correlation between the channels of EH and ID receivers. To this end, we first make a key definition below.
\begin{definition}\label{De:Correlation} 
	\emph{The correlation between any two near-field steering vectors is
		\vspace{-0.1cm}
		\begin{equation}
			\eta(\theta_{p},\theta_{q},r_{p},r_{q})=|\mathbf{b}^{H}(\theta_{p},r_{p})\mathbf{b}(\theta_{q},r_{q})|.\vspace{-0.1cm}
	\end{equation}}	
\end{definition}
\begin{remark}\label{Fresnel}
	\emph{Our prior work \cite{zhang2023mixed} studies the correlation between the near- and far-field steering vectors and reveals that the DFT-based far-field beams may cause strong interference to the near-field user even when they locate in different spatial angles. However, when taking a different view from the EH perspective, the power leakage from the DFT-based far-field beams to the near-field user can be used for charging EH devices efficiently. Moreover, it is worth noting that the correlation between near-field steering vectors is a general version of the near-and-far correlation since the far-field channel model is an approximation of the near-field channel model, which is shown below.}
\end{remark}
\begin{lemma}\label{Th:correlation}
	\emph{The correlation between any two near-field steering vectors $	\eta(\theta_{p},\theta_{q},r_{p},r_{q})$ can be approximated as
		\begin{equation}\label{NN_correlation}
			\eta(\theta_{p},\theta_{q},r_{p},r_{q})\approx\left|\frac{\hat{C}(\beta_1,\beta_2)+j\hat{S}(\beta_1,\beta_2)}{2\beta_2}\right|,
		\end{equation}
		where 
		\begin{equation}\label{keypara}
			\beta_1=\frac{(\theta_q-\theta_p)}{\sqrt{d\left|\frac{1-\theta_p^2}{ r_p}-\frac{1-\theta_q^2}{  r_q}\right|}},~~~
			\beta_2=\frac{N}{2}\sqrt{d\left|\frac{1-\theta_p^2}{ r_p}-\frac{1-\theta_q^2}{  r_q}\right|}.\nn
		\end{equation}
	\vspace{-0.5cm}
	}	
\end{lemma}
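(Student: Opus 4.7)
The plan is to reduce the inner product $\mathbf{b}^H(\theta_p,r_p)\mathbf{b}(\theta_q,r_q)$ to a truncated Fresnel integral via the second-order Taylor approximation stated earlier in the excerpt, and then relabel the limits so that the two scalars $\beta_1$, $\beta_2$ appear naturally.

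First I would write each entry of $\mathbf{b}(\theta,r)$ using $r^{(n)} - r \approx -\delta_n d\theta + \delta_n^2 d^2(1-\theta^2)/(2r)$, so that the $n$-th summand of $\mathbf{b}^H(\theta_p,r_p)\mathbf{b}(\theta_q,r_q)$ reduces to
\begin{equation}
\frac{1}{N}\exp\!\Bigl\{j\tfrac{2\pi d}{\lambda}\delta_n(\theta_q-\theta_p) - j\tfrac{\pi d^2}{\lambda}\delta_n^2\Bigl(\tfrac{1-\theta_q^2}{r_q}-\tfrac{1-\theta_p^2}{r_p}\Bigr)\Bigr\}.\nn
\end{equation}
Using the half-wavelength spacing $d=\lambda/2$ and writing $\alpha \triangleq \theta_q-\theta_p$, $\gamma \triangleq \tfrac{d}{2}\bigl(\tfrac{1-\theta_q^2}{r_q}-\tfrac{1-\theta_p^2}{r_p}\bigr)$, this collapses to a sum of the form $\tfrac{1}{N}\sum_n e^{j\pi(\alpha\delta_n - \gamma \delta_n^2)}$, i.e., a linear-chirp sum over the array.

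Next I would invoke the standard continuous approximation valid in the large-$N$ regime, replacing the sum by the integral $\tfrac{1}{N}\int_{-N/2}^{N/2} e^{j\pi(\alpha x - \gamma x^2)}\,dx$. Completing the square in the exponent gives $-j\pi\gamma(x - \alpha/(2\gamma))^2 + j\pi\alpha^2/(4\gamma)$, and then the substitution $u = \sqrt{|\gamma|}\,(x - \alpha/(2\gamma))$ turns the integral into a truncated Fresnel integral $\tfrac{1}{N\sqrt{|\gamma|}}\int_{u_-}^{u_+} e^{-j\pi\,\mathrm{sgn}(\gamma)\,u^2}du$ up to a unit-modulus prefactor. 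Because we only take magnitudes in $\eta$, the prefactor drops out.

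Then I would check that the substitution renders the endpoints in the form announced in the lemma: with $\beta_1 = \alpha/\sqrt{2|\gamma|}$ and $\beta_2 = (N/2)\sqrt{2|\gamma|}$, the integration limits become $u_\pm = \pm\beta_2 - \beta_1$ (or $\pm\beta_2 + \beta_1$ depending on the sign of $\gamma$, which is immaterial after taking modulus). Splitting $e^{-j\pi u^2} = \cos(\pi u^2) - j\sin(\pi u^2)$ and denoting the corresponding truncated cosine/sine integrals by $\hat{C}(\beta_1,\beta_2)$ and $\hat{S}(\beta_1,\beta_2)$, one gets exactly $\eta \approx |(\hat{C}+j\hat{S})/(N\sqrt{|\gamma|})|$. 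Finally, noting that $N\sqrt{|\gamma|} = 2\beta_2\cdot\sqrt{|\gamma|}/\sqrt{2|\gamma|}\cdot\sqrt{2} = 2\beta_2$ after rewriting yields \eqref{NN_correlation}.

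The main obstacles are (i) bookkeeping the signs of $\gamma$ and the shifted integration limits so that they map cleanly to the symmetric window $[-\beta_2,\beta_2]$ centered at $\beta_1$ (hence the special "hat" versions of the Fresnel functions in \eqref{NN_correlation}), and (ii) justifying the continuous approximation — I would appeal to the smoothness of the chirp integrand relative to the unit sampling interval and the fact that $N$ is large for XL-arrays, as is standard in near-field beam-pattern analyses. Routine algebra then matches the prefactor to $1/(2\beta_2)$ as claimed.
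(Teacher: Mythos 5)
Your derivation is correct and is essentially the route the paper relies on (it defers the proof to [Lemma 1, \cite{chen2022hierarchical}]): the Taylor-expanded phases turn the inner product into a discrete linear chirp, the large-$N$ sum-to-integral approximation plus completing the square produces a truncated Fresnel integral, and reading off the shifted endpoints and prefactor identifies $\beta_1$ and $\beta_2$ exactly as in \eqref{NN_correlation}. The only point to tidy is a factor of $\sqrt{2}$ in your final step: with the standard Fresnel convention $\int\cos(\pi t^2/2)\,dt$ the natural substitution is $t=\sqrt{2|\gamma|}\,\bigl(x-\alpha/(2\gamma)\bigr)$, which gives the prefactor $1/(N\sqrt{2|\gamma|})=1/(2\beta_2)$ and limits $\pm\beta_2-\beta_1$ directly, whereas with your $u=\sqrt{|\gamma|}(\cdot)$ normalization the extra $\sqrt{2}$ must be absorbed into the definition of $\hat{C},\hat{S}$ (integrand $\cos(\pi u^2)$ instead of $\cos(\pi u^2/2)$), after which the two bookkeepings agree.
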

\begin{proof}
The proof is similar to that in [Lemma 1,  \cite{chen2022hierarchical}] and hence is omitted for brevity.
\end{proof}

{\color{black}Lemma~\ref{Th:correlation} reveals the correlation of any two near-field steering vectors, which is fundamentally determined by two key parameters $\beta_1$ and $\beta_2$. Note that the correlation has a symmetry property, i.e., $\eta(\theta_{p},\theta_{q},r_{p},r_{q})=\eta(\theta_{q},\theta_{p},r_{q},r_{p})$.}
\subsection{Problem Reformulation} 
{\color{black}Based on Lemma \ref{Th:correlation}, $\tilde{R}(\{\tilde{P}^{\rm EH}_{k}\},\{\tilde{P}^{\rm ID}_{m}\})$ in \eqref{NewR} and $\tilde{Q}(\{\tilde{P}^{\rm EH}_{k}\},\{\tilde{P}^{\rm ID}_{m}\})$ in \eqref{NewQ} can be rewritten as functions of the EH/ID correlation, given by \eqref{NewNewR} and \eqref{NewNewQ}, as shown at the top of next page.}
\begin{figure*}
\begin{equation}\label{NewNewR}
\!\!\tilde{R}^{\rm ID}_{m}(\{\tilde{P}^{\rm EH}_{k}\!\},\!\{\!\tilde{P}^{\rm ID}_{m}\!\})\!=\!
\log_2\!\left(\!1+\!\frac{\tilde{P}^{\rm ID}_{m}g^{\rm ID}_{m}}{\sum^{K}_{k=1}\tilde{P}^{\rm EH}_{k}g^{\rm ID}_{m}\eta^2(\theta_{m},\theta_{k},r_{k})\!+\!\sum^{M}_{j=1,j	\neq m}\tilde{P}^{\rm ID}_{j}g^{\rm ID}_{m}\eta^2(\theta_{m},\theta_{j})\!+\!\sigma^2_{m}}\!\right)\!,\vspace{-0.1cm}
\end{equation}
\hrulefill
 \end{figure*}
 \begin{figure*}
 	\vspace{-0.4cm}
\begin{equation}\label{NewNewQ}
	\!\!\tilde{Q}(\{\tilde{P}^{\rm EH}_{k}\!\},\!\{\!\tilde{P}^{\rm ID}_{m}\!\})\!\!=\!\!
	\sum_{k=1}^{K}\!\alpha_{k}\zeta\!\!
	\left(\!\tilde{P}^{\rm EH}_{k}g^{\rm EH}_{k}\!+\!\sum_{i=1,i	\neq k}^{K}\tilde{P}^{\rm EH}_{i}g^{\rm EH}_{k}\eta^2(\theta_{k},\theta_{i},r_{k},r_{i})\!+\!\sum_{m=1}^{M}\tilde{P}^{\rm ID}_{m}g^{\rm EH}_{k}\eta^2(\theta_{m},\theta_{k},r_{k})\right).\vspace{-0.1cm}
\end{equation}
\hrulefill
\vspace{-0.6cm}
 \end{figure*}

To facilitate the analysis, we first define a correlation matrix as 
\begin{equation}
	\mathbf{\Lambda}=
	\begin{bmatrix}
		\eta^2_{1,1}&\eta^2_{1,2}&\cdots&\eta^2_{1,K+M}\\
		\eta^2_{2,1}&\eta^2_{2,2}&\cdots&\eta^2_{2,K+M}\\
		\vdots&&\ddots&\vdots\\
		\eta^2_{K+M,1}&\eta^2_{K+M,2}&\cdots&\eta^2_{K+M,K+M}\\
	\end{bmatrix},
\end{equation}
where $\eta^2_{p,q}$ denotes the correlation between the channel steering vectors of EH/ID receiver $p$ and EH/ID  receiver $q$, which is given by
\begin{equation}
	\eta^2_{p,q}=
	\begin{cases}
		|\mathbf{b}^{H}(\theta_{p},r_{p})\mathbf{b}(\theta_{q},r_{q})|^2 & ~\text{if}~1\le p,q \le K, \\
		|\mathbf{b}^{H}(\theta_{p},r_{p})\mathbf{a}(\theta_{q})|^2 & ~\text{if}~ 1\le p \le K,\\&~K+1\le q \le K+M,\\
			|\mathbf{a}^{H}(\theta_{p})\mathbf{a}(\theta_{q})|^2 & ~\text{if}~ K+1\le p,q\le K+M,\\
				1 & ~\text{if}~p=q.\nn\\
	\end{cases}
\end{equation}

{\color{black} Next, to rewrite problem (P2) in a more compact form, we further slightly change the form of the correlation matrix $\mathbf{\Lambda}$ by setting some of the
 diagonal elements to be zero (i.e., $([\mathbf{\Lambda}]_{(K+1),(K+1)}=0,\cdots,[\mathbf{\Lambda}]_{(K+M),(K+M)}=0)$), since we do not consider energy harvesting at the ID receivers. As such, the new correlation matrix is given by}
\begin{equation}
	\mathbf{\bar{\Lambda}}=
	\begin{bmatrix}
		1&\eta^2_{1,2}&\cdots&\eta^2_{1,K+M}\\
		\eta^2_{2,1}&1&\cdots&\eta^2_{2,K+M}\\
		\vdots&&\ddots&\vdots\\
		\eta^2_{K+M,1}&\eta^2_{K+M,2}&\cdots&0\\
	\end{bmatrix}.
\end{equation}
Let $\mathbf{y}=\left[\tilde{P}^{\rm EH}_{1},\cdots,\tilde{P}^{\rm EH}_{k},\cdots,\tilde{P}^{\rm EH}_{K},\tilde{P}^{\rm ID}_{1},\cdots,\tilde{P}^{\rm ID}_{m},\cdots,\tilde{P}^{\rm ID}_{M}\right]^{T}$ denote the vector including all power allocation optimization variables. Then, problem (P2) can be equivalently recast as follows
\begin{subequations}
	\begin{align}
		\max_{\substack{\mathbf{y}} }  &~~~~	(\mathbf{c}^{\rm EH})^{T}\mathbf{\bar{\Lambda}}\mathbf{y}\nn\\
		\text{s.t.}
		&~~~~ \sum_{m=1}^{M}\log_{2}\left( 1+\frac{(	\mathbf{c}^{\rm ID}_{m})^{T}\mathbf{y}}{(	\mathbf{c}^{\rm ID}_{m})^{T}\mathbf{\bar{\Lambda}}\mathbf{y}+\sigma^2_{m}}\right) \ge R,\label{C:sum-rate_m}
		\\({\bf P3}):~~
		&~~~~
		\mathbf{1}^{T}_{(K+M)\times 1}\mathbf{y}\le P_{0},\label{C:sum-power1}\\
		&~~~~
		\mathbf{y} \succeq  \mathbf{0}\label{C:nonn2},
	\end{align}
\end{subequations}
where $\mathbf{c}^{\rm EH}\!\!=\!\!\left[g^{\rm EH}_{1},\cdots,g^{\rm EH}_{k},\cdots,g^{\rm EH}_{K}, \mathbf{0}_{1\times M}\right]^{T}$ and $
	\mathbf{c}^{\rm ID}_{m}\!\!=\!\!\left[\mathbf{0}_{1\times K}, \mathbf{0}_{1\times (m-1)},g^{\rm ID}_{m}, \mathbf{0}_{(m+1)\times M}\right]^{T} ,\forall m\in \mathcal{M}.$
	
{\color{black} Compared with problem (P2), 
	problem (P3) renders a more concise form for the joint beam scheduling and power allocation optimization, since the objective function and constraints in \eqref{C:sum-power1} and \eqref{C:nonn2} are all affine. However, the sum-rate constraint in \eqref{C:sum-rate_m} is not in a convex form due to the complicated ratio terms as well as the intrinsic coupling between the power allocation. This problem will be solved in the next section.}
	\vspace{-0.1cm}
\section{Proposed Algorithm for Solving (P3)}\label{SCandGC}
	\vspace{-0.1cm}
In this section, an efficient algorithm is proposed to obtain a subopitmal solution to problem (P3). Specifically, we first introduce slack variables $\{S_{m}\}$ and $\{I_{m}\}$ such that 
\begin{equation}
	\frac{1}{S_{m}}=(\mathbf{c}^{\rm ID}_{m})^{T}\mathbf{y},~~
	I_{m}=(	\mathbf{c}^{\rm ID}_{m})^{T}\mathbf{\bar{\Lambda}}\mathbf{y}+\sigma^2_{m},~~\forall m \in \mathcal{M}.
\end{equation}
Then, problem (P3) can be equivalently transformed into the following form
\begin{subequations}
	\begin{align}
		\max_{\substack{\mathbf{y},\{S_{m}\},\{I_{m}\}} }  &~~	(\mathbf{c}^{\rm EH})^{T}\mathbf{\bar{\Lambda}}\mathbf{y}\nn\\
		\text{s.t.}~~~
		&~~~~ \sum_{m=1}^{M}\log_{2}\left( 1+\frac{1}{S_{m}	I_{m}}\right) \ge R,\label{C:71}
		\\({\bf P4}):~~
		&~~~~ 	\frac{1}{S_{m}}\leq(\mathbf{c}^{\rm ID}_{m})^{T}\mathbf{y}, ~~~\forall m \in \mathcal{M},\label{C:xx2} \\
		&~~~~ 	I_{m}\geq(	\mathbf{c}^{\rm ID}_{m})^{T}\mathbf{\bar{\Lambda}}\mathbf{y}+\sigma^2_{m}, ~~~\forall m \in \mathcal{M}, \label{C:sum-rate_m1}\\
		&~~~~
		\eqref{C:sum-power1},\eqref{C:nonn2}.\nn	\\[-0.6cm]\nn
	\end{align}
\end{subequations}
{Note that the remaining challenge for solving problem (P4) is the constraint in \eqref{C:71}. To tackle this issue, we first present an useful lemma below.}
\begin{lemma}\label{Le:sca}
	\emph{Given $x>0$ and $y>0$, $f(x,y)=\log_{2}(1+\frac{1}{xy})$ is a convex function with respect to $x$ and $y$.}
\end{lemma}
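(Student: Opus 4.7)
The plan is to establish convexity directly via the second-order test on the open domain $\{(x,y):x>0,y>0\}$. First I would rewrite $f(x,y)=\log_2(1+xy)-\log_2 x-\log_2 y$, since this form makes differentiation painless. The first-order partials then simplify to $\partial f/\partial x = -1/[x(xy+1)\ln 2]$ and the symmetric expression for $\partial f/\partial y$. A second round of differentiation produces the Hessian entries $H_{xx}=(2xy+1)/[x^2(xy+1)^2\ln 2]$, $H_{yy}=(2xy+1)/[y^2(xy+1)^2\ln 2]$, and $H_{xy}=1/[(xy+1)^2\ln 2]$.

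To confirm that the Hessian is positive semidefinite I would check (i) both diagonals, which are manifestly positive, and (ii) the $2\times 2$ determinant. After pulling out the common denominator, the determinant reduces to a multiple of $(2xy+1)^2-(xy)^2$, and using the algebraic identity $(2xy+1)^2-(xy)^2=(3xy+1)(xy+1)$ this collapses to $(3xy+1)/[(\ln 2)^2 x^2y^2(xy+1)^3]>0$. Hence the Hessian is in fact positive definite on the domain and $f$ is (strictly) convex.

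A shorter alternative, which I would mention if space permits, is to note that $1/(xy)=e^{-(\ln x+\ln y)}$ and to set $g(t)=\log_2(1+e^{-t})$ together with $h(x,y)=\ln x+\ln y$, so that $f=g\circ h$. A one-line single-variable computation gives $g''(t)=e^{t}/[(1+e^{t})^2\ln 2]>0$ and $g'(t)<0$, i.e.\ $g$ is convex and nonincreasing; combined with the concavity of $h$ (a sum of two concave functions), the standard composition rule ``convex nonincreasing $\circ$ concave is convex'' yields the lemma immediately.

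I do not foresee a genuine obstacle. The only care needed is the tidy bookkeeping in the Hessian route---in particular spotting the factorization $(2xy+1)^2-x^2y^2=(3xy+1)(xy+1)$ so the determinant's sign is obvious---or, in the composition route, correctly invoking that the outer function is \emph{nonincreasing} (so that composition with a concave inner function preserves convexity rather than reversing it).
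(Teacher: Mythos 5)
Your proof is correct, and both of your routes go through. I checked the Hessian computation: with $f(x,y)=\log_2(xy+1)-\log_2 x-\log_2 y$ one indeed gets $\partial f/\partial x=-1/[x(xy+1)\ln 2]$, the stated second derivatives, and $\det H=(3xy+1)/[(\ln 2)^2x^2y^2(xy+1)^3]>0$ via the factorization $(2xy+1)^2-(xy)^2=(xy+1)(3xy+1)$; together with $H_{xx}>0$, Sylvester's criterion gives (strict) convexity. The composition argument is also sound: $g(t)=\log_2(1+e^{-t})$ is convex and nonincreasing and $h(x,y)=\ln x+\ln y$ is concave, so $f=g\circ h$ is convex. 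Note that the paper itself gives no proof at all---it simply defers to Lemma 1 of an external reference---so your write-up is strictly more informative than what appears in the text. The standard proof in such references is the Hessian computation you carried out; your composition-rule alternative is the more elegant of the two, as it avoids all bookkeeping and makes clear why the result holds (it is really a statement about $\log_2(1+e^{-t})$ being convex decreasing), at the cost of invoking the convex--concave composition rule rather than elementary calculus alone.
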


\begin{proof}
	The proof is similar to that of [Lemma 1, \cite{9139273}] and hence is omitted for brevity.
\end{proof}
{\color{black}Based on Lemma~\ref{Le:sca}, $\log_{2}(1+\frac{1}{S_{m}	I_{m}})$ in \eqref{C:71} is a joint convex function with respect to $S_{m}$ and $I_{m}$. This thus allows us to apply the SCA method to tackle the constraint \eqref{C:71} as below.}
\begin{lemma}
		\emph{By applying the first-order Taylor expansion, $\log_{2}(1+\frac{1}{S_{m}I_{m}})$ can be lower-bounded as 
		\begin{align}
			\log_{2}&\left( 1+\frac{1}{S_{m}	I_{m}}\right) \ge R^{\rm low}_{m}\triangleq \log_{2}\left(1+\frac{1}{\widetilde{S}_{m}\widetilde{I}_{m}}\right)\nn\\
			&-\frac{(S_{m}-\widetilde{S}_{m})(\log_{2}e)}{\widetilde{S}_{m}+\widetilde{S}_{m}^2\widetilde{I}_{m}}-\frac{(I_{m}-\widetilde{I}_{m})(\log_{2}e)}{\widetilde{I}_{m}+\widetilde{I}_{m}^2\widetilde{S}_{m}},\nn
		\end{align}}
	\end{lemma}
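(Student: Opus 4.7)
The plan is to leverage Lemma~3, which establishes that $f(S_m, I_m) \triangleq \log_2(1+\frac{1}{S_m I_m})$ is jointly convex on the positive orthant $\{S_m>0, I_m>0\}$. For any differentiable convex function of several variables, the graph lies above every one of its tangent hyperplanes; equivalently, for any feasible expansion point $(\widetilde{S}_m, \widetilde{I}_m)$ one has the first-order global underestimator
\begin{equation}
f(S_m, I_m) \;\ge\; f(\widetilde{S}_m, \widetilde{I}_m) + \nabla f(\widetilde{S}_m, \widetilde{I}_m)^T \begin{bmatrix} S_m - \widetilde{S}_m \\ I_m - \widetilde{I}_m \end{bmatrix}. \nonumber
\end{equation}
Thus the entire proof reduces to (i) invoking joint convexity from Lemma~3, (ii) computing the two partial derivatives, and (iii) evaluating them at $(\widetilde{S}_m,\widetilde{I}_m)$.

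For step (ii), I would write $f(S_m,I_m)=\log_2(1+(S_m I_m)^{-1})$ and differentiate by the chain rule. Specifically,
\begin{equation}
\frac{\partial f}{\partial S_m} = \frac{\log_2 e}{1 + (S_m I_m)^{-1}} \cdot \left(-\frac{1}{S_m^2 I_m}\right) = -\frac{\log_2 e}{S_m + S_m^2 I_m}, \nonumber
\end{equation}
and by the symmetry of $f$ in its two arguments, $\partial f/\partial I_m = -\log_2 e/(I_m + I_m^2 S_m)$. Substituting these gradients evaluated at $(\widetilde{S}_m,\widetilde{I}_m)$ into the tangent-hyperplane inequality yields exactly the claimed expression for $R^{\rm low}_m$.

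There is no real obstacle here, since Lemma~3 already shoulders the convexity burden and the derivative computation is a short application of the chain rule. The only minor point to mention is that both partial derivatives are well defined at $(\widetilde{S}_m,\widetilde{I}_m)$ provided $\widetilde{S}_m>0$ and $\widetilde{I}_m>0$, which is guaranteed by the definitions $1/S_m = (\mathbf{c}^{\rm ID}_m)^T \mathbf{y}$ and $I_m \ge (\mathbf{c}^{\rm ID}_m)^T\bar{\mathbf{\Lambda}}\mathbf{y}+\sigma_m^2>0$ together with the sum-rate requirement forcing each ID receiver to be served with positive power. Finally, I would remark that this SCA surrogate $R^{\rm low}_m$ is affine in $(S_m, I_m)$, so replacing the original non-convex constraint $\sum_m \log_2(1+1/(S_m I_m))\ge R$ by $\sum_m R^{\rm low}_m \ge R$ gives a convex (in fact linear) restriction, which is precisely what is needed to invoke the SCA machinery in the next step of the algorithm.
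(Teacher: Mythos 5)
Your proof is correct and follows exactly the route the paper intends: the paper states this lemma without proof, relying implicitly on the preceding convexity lemma so that the first-order Taylor expansion is a global underestimator, and your gradient computation $\partial f/\partial S_m = -\log_2 e/(S_m + S_m^2 I_m)$ (and its symmetric counterpart) reproduces the stated bound exactly.
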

where $\{\widetilde{S}_{m},\widetilde{I}_{m}\}$ denote any given feasible points. As such, problem (P4) is approximated as the following problem
\begin{subequations}
	\begin{align}
		({\bf P5}):~~\max_{\substack{\mathbf{y},\{S_{m}\},\{I_{m}\}} }  &~~~~	(\mathbf{c}^{\rm EH})^{T}\mathbf{\bar{\Lambda}}\mathbf{y}\nn\\
		\text{s.t.}~~~
		&~~~~ \sum_{m=1}^{M}R^{\rm low}_{m}\ge R,
		\\
		&~~~~ \eqref{C:sum-power1},\eqref{C:nonn2},\eqref{C:xx2},\eqref{C:sum-rate_m1}.	\nn
					\\[-0.6cm]\nn
	\end{align}
\end{subequations}
Problem (P5) now is a convex optimization problem, which thus can be
efficiently solved via standard solvers such as CVX. {Note that the solution obtained from problem (P5) requires an iterative process until the fractional increase of the objective function is below a threshold $\xi>0$, 
based on which the optimal beam scheduling and power allocation can be easily recovered.}

\begin{remark}[Convergence and complexity analysis]
	\emph{First, consider the convergence of the proposed SCA-based algorithm. As problem (P5) is solved optimally in each iteration, the objective value of problem (P5) is non-decreasing. Moreover, since the system sum-power is upper-bounded by a finite value, the proposed algorithm is guaranteed to converge. Next, the overall complexity of the proposed SCA-based algorithm is determined by the number of iterations, denoted by $I_{\rm iter}$, and the number of optimization variables (i.e., $K+M$). For each iteration, the computational complexity for solving problem (P5) by the interior-point method can be characterized as $\mathcal{O}((K+M)^{3.5})$. As a result, the total computational complexity of the proposed SCA-based algorithm is $\mathcal{O}(I_{\rm iter}(K+M)^{3.5})$. }
\end{remark}

	\vspace{-0.3cm}
\section{Numerical Results}\label{SE:NR}
	\vspace{-0.1cm}
In this section, we present numerical results to demonstrate the effectiveness of the proposed scheme.
The considered system model is illustrated in Fig.~\ref{fig:simlu}, where one BS with $N=256$ antennas serves three EH receivers and two ID receivers. Unless specified otherwise, we set $P_0 =30$ dBm, $f=30$ GHz, $\beta=(\lambda/4\pi)^2=-62$ dB, $\sigma_{m}^2=-80$ dBm, $\zeta=50\%$, $Z=\frac{2D^2}{\lambda}=327.68$ m, $\xi=0.001$, $L_{k}=L_{m}=2, \forall k,m$ and $\alpha_{k}=1, \forall k \in \mathcal{K}$. Specifically, under the polar coordinate, the EH receivers are located at $(0,0.015Z)$, $(0.1,0.02Z)$ and $(-0.05,0.03Z)$, while the two ID receivers are located  at $(0,1.05Z)$ and $(0.05,1.2Z)$.

Moreover, for performance comparison, we consider the following four benchmark schemes:
\emph{1) Exhaustive-search scheme:} This scheme enumerates  all possible beam scheduling combinations with optimized power allocation, and then selects the best one that achieves the maximum harvested sum-power;
\emph{2) Greedy scheduling + optimized power allocation (GS+OPA) scheme}, for which the EH receiver with the highest EH priority and the ID receiver with the best channel condition are scheduled with optimized power allocation;
\emph{3) Optimal scheduling + equal power allocation (OS+EPA) scheme}, which adopts the optimal beam scheduling and allocates equal power allocation for all selected receivers;
\emph{4) All scheduling + equal power allocation (AS+EPA) scheme}, for which all EH and ID receivers are scheduled with equal power allocation.

\begin{figure*}[t]
	\centering
	\subfigure[Simulation setup.]{\includegraphics[width=4.5cm]{./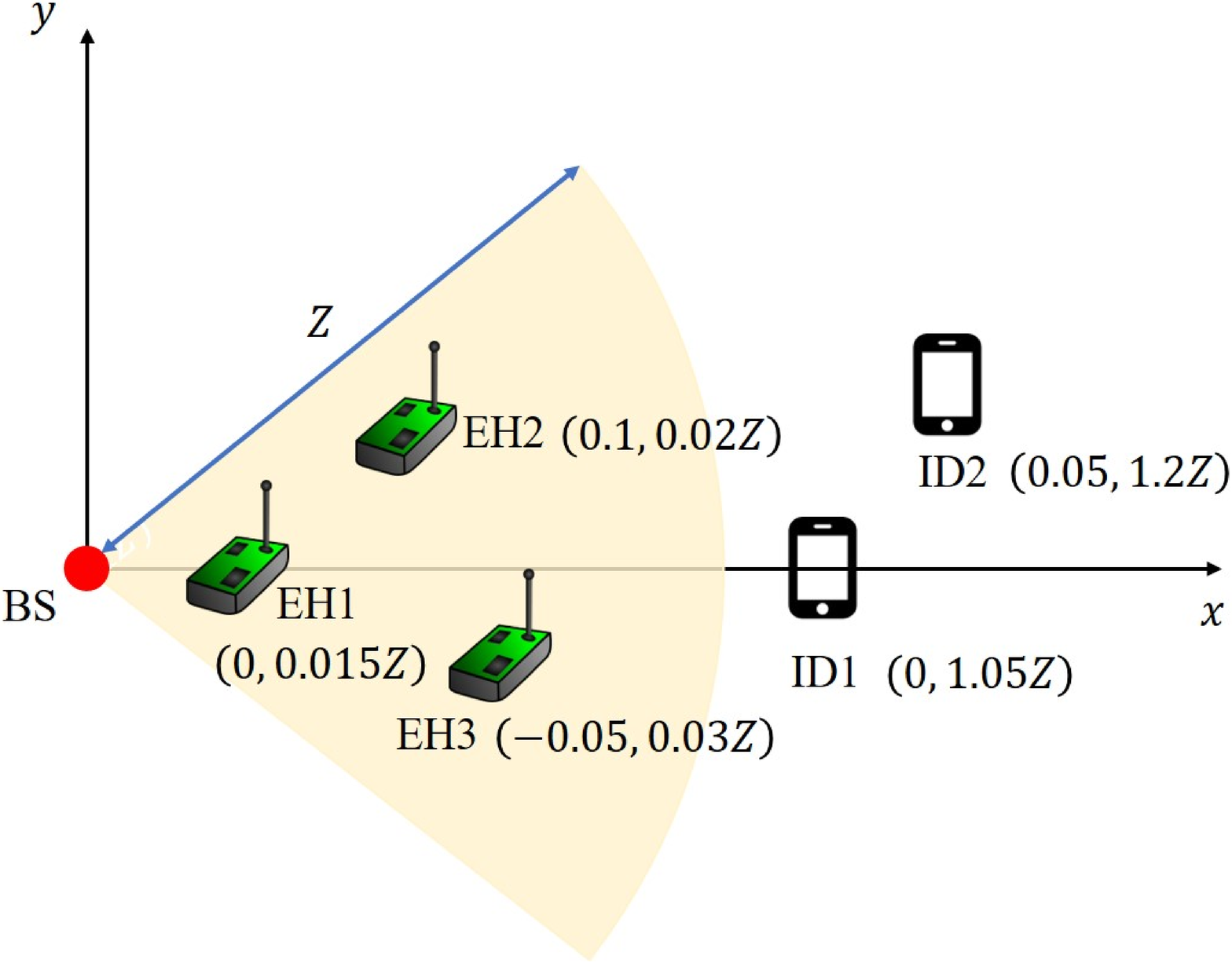}\label{fig:simlu}}
	\subfigure[$K=3$, $M=2$.]{\includegraphics[width=4.5cm]{./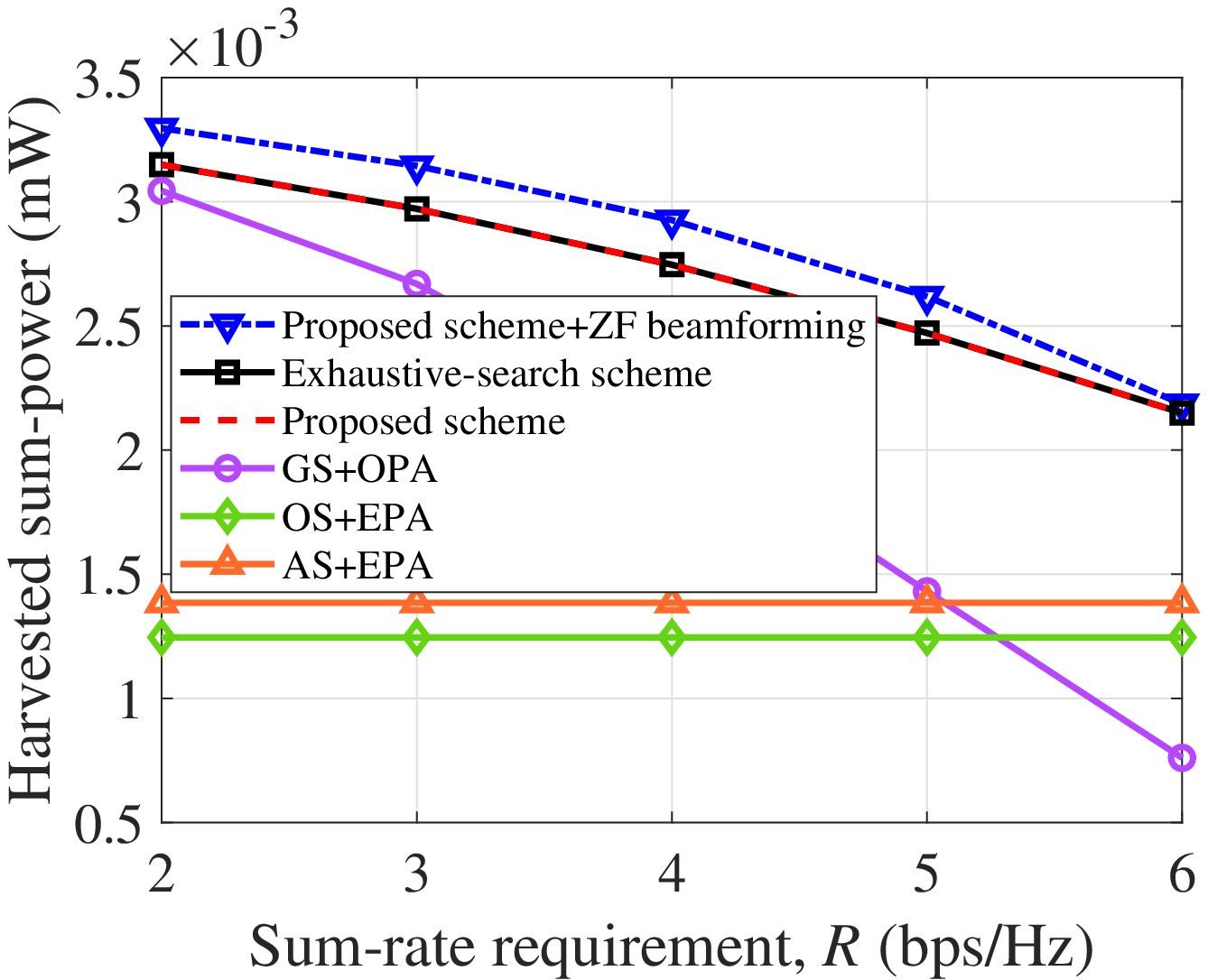}\label{fig:ratecons}}
	\subfigure[$K=3$, $M=2$.]{\includegraphics[width=4.5cm]{./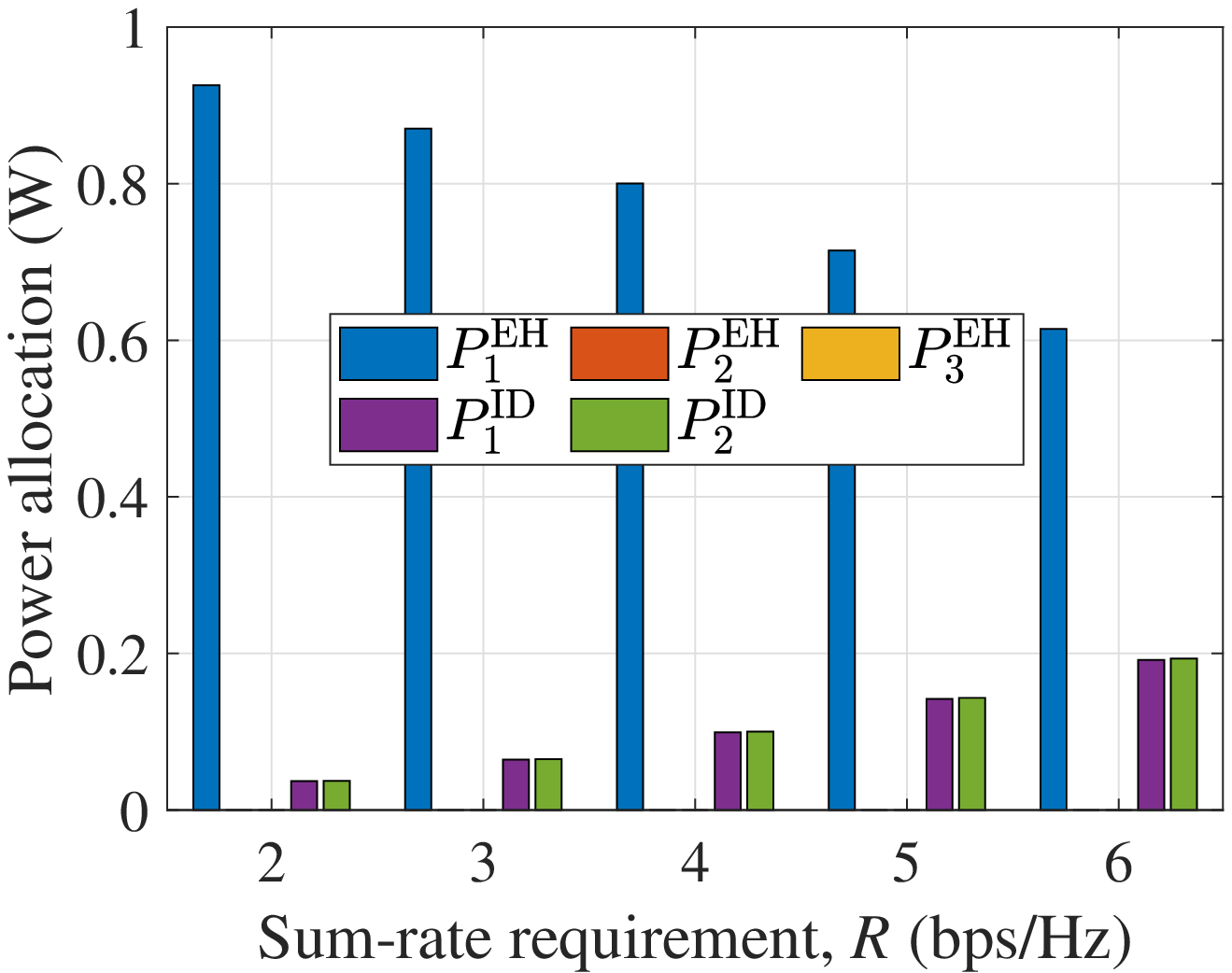}\label{fig:powall}}
	\subfigure[$K=3$, $R=5$.]{\includegraphics[width=4.5cm]{./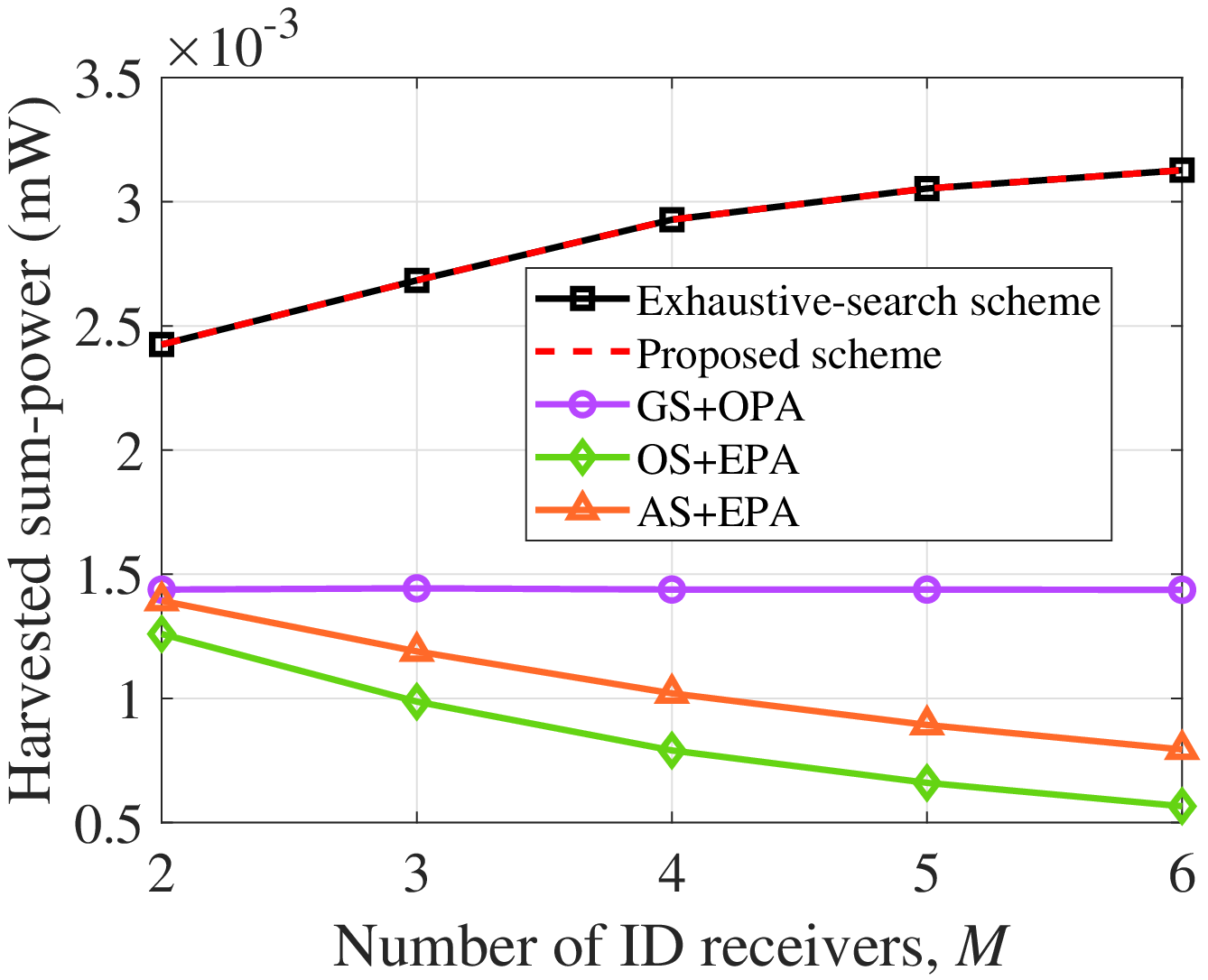}\label{fig:numIUs}}	\vspace{-6pt}
	\caption{Simulation setup and harvested sum-power versus system parameters.}
	\vspace{-18pt}
\end{figure*} 
\vspace{-0.1cm}
\subsection{Effect of Sum-Rate Requirement}
In Fig.~\ref{fig:ratecons}, we compare the harvested sum-power by different schemes versus the sum-rate requirement, $R$. First, it is observed that the proposed scheme suffers a small performance loss compared to the proposed scheme combined with the ZF digital beamforming.
Second,
for the proposed scheme and the GS+OPA scheme, the harvested power  monotonically decreases as the sum-rate requirement increases. This is because when the sum-rate requirement increases, more transmit power should be allocated to the ID beams for satisfying the sum-rate requirement, hence leaving less power for the EH beams. One interesting observation is that the gap between the proposed scheme and the GS+OPA scheme becomes larger with the growing sum-rate requirement. This is because for the proposed scheme, the power allocated to different ID beams follows the ``water-filling" structure, while the GS+OPA scheme allocates transmit power to one ID beam only, thus resulting in a lower EH efficiency.
{Moreover, we plot the power allocations to all EH and ID beams versus the sum-rate requirement in Fig.~\ref{fig:powall}. It is observed that some power is allocated to the ID beams to satisfy the sum-rate requirement, while a large portion of power is allocated to one EH beam to maximize the harvested sum-power. This is in
	sharp contrast to the conventional far-field SWIPT case, for which all power should be allocated to ID
	beams \cite{6860253}.}
	\vspace{-0.2cm}

\subsection{Effect of Number of ID Receivers}
Next, we show the effect of the number of ID receivers on the harvested sum-power by different schemes in Fig. \ref{fig:numIUs}. Apart from the original two ID receivers, the newly added ID receivers are uniformly distributed in an area with a radius range 
$\left[ 1.05Z,1.3Z\right] $ and a spatial angle range $\left[ - \frac{\pi}{3},\frac{\pi}{3}\right] $. First, it is observed that the harvested power with the proposed scheme monotonically increases with $M$. This is intuitively expected since when $M$ increases, the newly added ID receivers can also charge the selected EH receiver. Besides, the harvested sum-power by both OS+EPA and AS+EPA schemes decreases with $M$. This is because less transmit power is allocated to the scheduled EH beam with a larger $M$, thus reducing the total harvested sum-power.

	\vspace{-0.1cm}
\section{Conclusions}\label{CL}
	\vspace{-0.1cm}
In this paper, we considered a new mixed-field SWIPT system consisting of both near-field EH receivers and far-field ID receivers. Specifically, we first formulated an optimization problem to maximize the weighted sum-power harvested at all EH receivers by  jointly optimizing the BS beam scheduling and power allocation, subject to the ID sum-rate and BS transmit power constraints. We then proposed an efficient algorithm that leverages the binary variable elimination and SCA methods to obtain a suboptimal solution. Last, numerical results are presented to validate the effectiveness of the proposed scheme against various benchmark schemes.
	\vspace{-0.1cm}
\bibliographystyle{IEEEtran}
\bibliography{IEEEabrv,Ref}

\end{document}